\newtheorem{theorem}{Theorem}
\newtheorem{lemma}[theorem]{Lemma}
\newtheorem{proposition}[theorem]{Proposition}
\newtheorem{definition}[theorem]{Definition}
\newtheorem{assumption}[theorem]{Assumption}
\DeclareMathOperator{\cost}{c}
\DeclareMathOperator{\RI}{RI}
\DeclareMathOperator{\AU}{AU}
\newcommand{\RIO}[2]{[#1, #2)}
\tikzstyle{node} = [circle, draw, minimum size = 0.7 cm,
\newlength{\gd}
\newlength{\dan}
\newsavebox{\imageboxa}
\newsavebox{\imageboxb}
\newsavebox{\imageboxc}
\newsavebox{\imageboxd}
\title{Is dynamic dedicated path protection tractable?}
\author{Ireneusz Szcześniak, Ireneusz Olszewski, Bożena Woźna-Szcześniak

  \thanks{I.~Szcześniak is with the Department of Computer Science of
    the Częstochowa University of Technology, Częstochowa, Poland,
    iszczesniak@pcz.pl}%
  \thanks{I.~Olszewski is with the Faculty of Telecommunications,
    Computer Science and Electrical Engineering, Bydgoszcz University
    of Science and Technology, Bydgoszcz, Poland, irek@pbs.edu.pl}%
  \thanks{B.~Woźna-Szcześniak is with the Department of Mathematics
    and Computer Science of the Jan Długosz University, Częstochowa,
    Poland, b.wozna@ujd.edu.pl}}
\begin{document}

\maketitle

\begin{abstract}
Intractable is the problem of finding two link-disjoint paths of
minimal cost if the path cost is limited since it can be a special
case of the partition problem.  In optical networks, this limit can be
introduced by the signal modulation reach.  Even without this limit,
the existing literature suggested the problem intractable because of
the spectrum continuity and contiguity constraints, but we show that
the problem can be solved exactly with the recently-proposed
\emph{generic} Dijkstra algorithm over a polynomially-bounded search
space, thus proving the problem tractable.
\end{abstract}

\begin{IEEEkeywords}
Routing, correctness, computational complexity, generic Dijkstra,
continuity and contiguity constraints.
\end{IEEEkeywords}

\section{Introduction}


Routing is a combinatorial optimization problem where we search for an
exact (optimal) solution in a network.  Whether a specific problem is
tractable (of polynomial time) or intractable (usually of exponential
time) depends on, most notably, the constraints or the number of paths
to find.  Heuristic algorithms can solve an intractable problem
quicker than an exact algorithm but they do not guarantee exact
results.


The simplest is the shortest path problem solved by the celebrated
Dijkstra algorithm.  However, the Dijkstra algorithm is unable to find
a path in optical networks.  Optical spectrum is divided into a finite
set of fine frequency slot units, and a connection should use on all
links of its path the same units (continuity constraint) that are
contiguous (contiguity constraint).  The \emph{generic} Dijkstra
algorithm can do that \cite{10.1364/JOCN.11.000568}.


In \emph{static} routing a number of connections are routed in an
unloaded network, while in \emph{dynamic} routing a single connection
is routed in a loaded network.  Static (aka offline) routing is more
about planning a network, while dynamic (aka online) routing is more
about operating a network.  In optical networks, static routing is
intractable but dynamic is not \cite{10.1109/NOMS56928.2023.10154322}.


Dedicated path protection establishes a connection made of two
link-disjoint paths: the working and the protecting.  The problem of
finding two link-disjoint paths of minimal cost is tractable and
solved by the Suurballe algorithm \cite{10.1002/net.3230040204}.  For
optical networks, however, the Suurballe algorithm is unable to take
into account the continuity and contiguity constraints, and whether an
efficient and exact algorithm existed was unknown.


Our contribution is a proof that dynamic dedicated path protection
(DDPP) in optical networks is tractable if the costs of the working
and protecting paths are unlimited.

\section{Related works}


In \cite{10.1109/INFCOM.2004.1354524}, for DDPP in wavelength-division
multiplexed (WDM) networks, in order to minimize the number of links
used, the authors propose an integer-linear programming (ILP)
formulation and heuristic algorithms.  The authors also prove (Theorem
2) that a \emph{related decision problem} of whether there exists a
pair of link-disjoint paths (without requiring its cost be at most a
given value, thus making this problem different from the decision
problem of DDPP) meeting the continuity constraint is
non-deterministic polynomial-time complete (NP-complete) by reducing
from the 3SAT problem: for every instance of the 3SAT problem, there
exists a corresponding pair of link-disjoint paths (regardless of its
cost), and vice versa.  The number of instances of the 3SAT problem is
exponential, so exponential at least is the number of link-disjoint
path pairs, including the most expensive.  An easier problem of
finding the most expensive \emph{single path} has its decision problem
(that requires a path of at least a given cost) NP-complete.  The
authors argue that the proof validates the study of ILP formulations
and heuristics, suggesting that the DDPP problem is intractable.
However, DDPP searches for pairs not of any cost but of minimal cost
only, and so its complexity class remained unknown.


The problem of finding two link-disjoint paths becomes intractable
when we require the length of a path be at most $K$, as proven by
Theorem 3.6 in \cite{10.1002/net.3230120306}: the corresponding
decision problem is NP-complete since it becomes the partition problem
in a special case.  In optical networks, for any value of $K$ that
models the reach of the least spectrally-efficient modulation, DDPP
also becomes intractable \cite{10.1109/ACCESS.2019.2901018} because
the problem of finding two limited link-disjoint paths (without the
spectrum continuity and contiguity constraints) is a special case of
DDPP when an optical network has a single unit available on every
link.


In \cite{10.3390/e23091116}, we proposed an algorithm for DDPP that is
based on our generic Dijkstra algorithm, or more specifically, that is
the generic Dijkstra algorithm fed with the search graph purpose-built
for DDPP, generated on the fly from the input graph.  The simulation
experimental evidence suggested that the problem could be tractable
because we were able to find solutions in networks of a hundred nodes
and 640 units in reasonable time.  In contrast, intractable problems
can be often solved quickly for small problem sizes, but as the
problem size increases (even by one), the computation time increases
drastically, usually exponentially -- that was not pronounced in our
simulations.  While the signal modulation reach was not part of the
problem statement, in simulations it was, and most likely killed some
simulations with memory overuse, as reported in that work.
Admittedly, we did not know that introducing the limit would make that
big a difference.

\section{Preliminaries}



The generic Dijkstra algorithm operates on the search graph generated
from the input graph that models the optical network.  To
differentiate between the input and the search graphs, we call the
input graph the network (with nodes and links), and the search graph
just a graph (with vertexes and edges).  That split in terminology
clarifies presentation: a vertex in the input graph is just a node; a
vertex in the search graph is just a vertex.  A network models the
optical network, a graph models the search graph.


The network is made of $N$ nodes and directed links.  Parallel links
are allowed.  Link $k$ has cost $\cost(k)$ and the set $\AU(k)$ of
available units for establishing a connection.  We denote the set of
all units as $\Omega$.  Route $r$ is a sequence of neighboring links.
\emph{Trait} $t$ is a pair of cost $\cost(t)$ of total order $<$ and
units $\RI(t)$.  Trait provides information needed to establish a
connection along route $r$.  Units $\RI(t)$ is a half-closed integer
interval that models the contiguous units available along the route.
For example, $\RI(t) = \RIO{0}{2}$ means units 0 and 1.  Different
traits can refer to the same route if a connection can be established
using different contiguous units.


The graph is made of vertexes $(n_a, n_b)$, i.e., a pair of nodes,
that model a pair of routes: one that leads to node $n_a$, the other
to $n_b$.  Edge $e$ represents a way of transitioning from $(n_a,
n_b)$ to some other vertex by adding a link to one of the routes.  The
graph is generated on-the-fly, and we start searing at the vertex for
nodes where the routes begin.  For DDPP, the routes begin at the same
node $n_s$, so the first vertex is $(n_s, n_s)$.


Path $p$ is a sequence of neighboring edges.  \emph{Label} $l$
provides information needed to establish a connection along path $p$
made of two routes, and so a label is a pair of two traits.  Different
labels can refer to the same path.  In standard Dijkstra, a path and a
label are synonymous (i.e., they represent the same solution), since a
path can have a single label.  In generic Dijkstra, a path does not
represent a solution; it is a label that do, since a path can have
many labels.


Appending edge $e$ to path $p$ (Fig.~\ref{f:graph}) represents adding
a specific link $k$ to one of the routes.  There may be other links
leaving the end nodes of the two routes, but edges other than $e$
would represent those links.  Since dedicated protection offers
link-disjoint routes, we require no two edges in a path model the same
link.


Fig.~\ref{f:nets} shows the only three ways of appending link $k$ to
one of the two routes that ends at the source node of link $k$.  In
Figures \ref{f:network1} and \ref{f:network2}, route $r_a$ leads to
node $n_a$ with trait $t_a$, and route $r_b$ to node $n_b$ with trait
$t_b$, and so path $p$ reaches vertex $w = (n_a, n_b)$ with label $l =
(t_a, t_b)$.  Link $k$ can only be added to route $r_a$.  In
Fig.~\ref{f:network3}, both routes $r_a$ and $r'_a$ lead to node $n_a$
with traits $t_a$ and $t'_a$, and so path $p$ reaches vertex $w =
(n_a, n_a)$ with label $l = (t_a, t'_a)$.  Link $k$ can be added to
either of the routes because they both end at the same node.

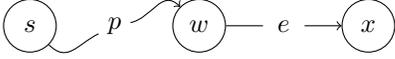
\begin{figure}
  \centering%
  \begin{tikzpicture}[node distance = 2.25 cm, ->]

    \node[node] (s) {$s$};

    \node[node, right of = s] (w) {$w$};

    \node at ($(s)!0.5!(w)$) (p) {$p$};

    \node[node, right of = w] (x) {$x$};

    \draw (w) -- node [circle, fill = white] {$e$} (x);

    \draw
    (s) .. controls ($(s) + (0.5, -0.5)$) and ($(p) + (-0.5, -0.25)$) ..
    (p) .. controls ($(p) + (0.5, 0.1)$) and ($(w) + (-0.5, 0.5)$)..
    (w);

  \end{tikzpicture}
  \caption{Appending edge $e$ to path $p$.}%
  \label{f:graph}
\end{figure}

\savebox{\imageboxa}{
  \begin{tikzpicture}[node distance = 2.25 cm, ->]

    \node[node] (ns) {$n_s$};
    \coordinate[right of = ns] (u);
    \node[node] at ($(u) + (0, 0.75)$) (na) {$n_a$};
    \node[node] at ($(u) + (0, -0.75)$) (nb) {$n_b$};
    \node [node, right of = na] (nk) {$n_k$};

    \coordinate (ma) at ($ (ns)!.5!(na) $);
    \node at ($ (ma)!0.5 cm!90:(na) $) (ra) {$r_a$};
    \coordinate (mb) at ($ (ns)!.5!(nb) $);
    \node at ($ (mb)!0.5 cm!-90:(nb) $) (rb) {$r_b$};

    \draw (na) -- node [circle, fill = white] {$k$} (nk);

    \draw
    (ns) .. controls ($(ns) + (0, 1)$) and ($(ra) - (0.5, 0.2)$) ..
    (ra) .. controls ($(ra) + (0.5, 0.2)$) and ($(na) + (-0.5, 0.25)$)..
    (na);

    \draw
    (ns) .. controls ($(ns) + (0.5, -0.5)$) and ($(rb) - (1, 0.5)$) ..
    (rb) .. controls ($(rb) + (0.5, 0.5)$) and ($(nb) - (0.5, -0.5)$)..
    (nb);
  \end{tikzpicture}
}

\savebox{\imageboxb}{
  \begin{tikzpicture}[node distance = 2.25 cm, ->]

    \node[node] (ns) {$n_s$};
    \coordinate[right of = ns] (u);
    \node[node] at ($(u) + (0, 0.75)$) (na) {$n_a$};
    \node[node] at ($(u) + (0, -0.75)$) (nb) {$n_b$};

    \coordinate (ma) at ($ (ns)!.5!(na) $);
    \node at ($ (ma)!0.5 cm!90:(na) $) (ra) {$r_a$};
    \coordinate (mb) at ($ (ns)!.5!(nb) $);
    \node at ($ (mb)!0.5 cm!-90:(nb) $) (rb) {$r_b$};

    \draw (na) edge [bend left = 60] node [circle, fill = white] {$k$} (nb);

    \draw
    (ns) .. controls ($(ns) + (0, 1)$) and ($(ra) - (0.5, 0.2)$) ..
    (ra) .. controls ($(ra) + (0.5, 0.2)$) and ($(na) + (-0.5, 0.25)$)..
    (na);

    \draw
    (ns) .. controls ($(ns) + (0.5, -0.5)$) and ($(rb) - (1, 0.5)$) ..
    (rb) .. controls ($(rb) + (0.5, 0.5)$) and ($(nb) - (0.5, -0.5)$)..
    (nb);
  \end{tikzpicture}
}

\savebox{\imageboxc}{
  \begin{tikzpicture}[node distance = 2.25 cm, ->]

    \node[node] (ns) {$n_s$};

    \node[node, right of = ns] (na) {$n_a$};

    \coordinate (mid) at ($(ns)!0.5!(na)$);
    \node at ($(mid) + (0, 0.75)$) (ra) {$r_a$};
    \node at ($(mid) + (0, -0.75)$) (r'a) {$r'_a$};

    \node [node, right of = na] (nk) {$n_k$};

    \draw (na) -- node [circle, fill = white] {$k$} (nk);

    \draw
    (ns) .. controls ($(ns) + (0, 0.75)$) and ($(ra) + (-0.5, 0)$) ..
    (ra) .. controls ($(ra) + (0.75, 0.25)$) and ($(na) + (-0.5, 0.5)$)..
    (na);

    \draw
    (ns) .. controls ($(ns) + (0.5, -0.5)$) and ($(r'a) - (0.75, 0.25)$) ..
    (r'a) .. controls ($(r'a) + (0.5, 0.5)$) and ($(na) - (0.5, 0.75)$)..
    (na);
  \end{tikzpicture}
}

\begin{figure*}
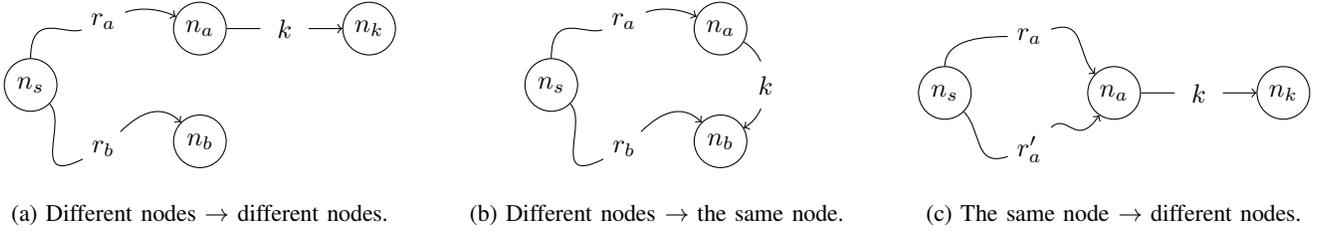

  \begin{subfigure}[b]{0.33\textwidth}%
    \centering\usebox{\imageboxa}%
    \caption{Different nodes $\to$ different nodes.}%
    \label{f:network1}%
  \end{subfigure}%
  \hfill%
  \begin{subfigure}[b]{0.33\textwidth}%
    \centering\usebox{\imageboxb}%
    \caption{Different nodes $\to$ the same node.}%
    \label{f:network2}%
  \end{subfigure}%
  \hfill%
  \begin{subfigure}[b]{0.33\textwidth}%
    \centering\raisebox{\dimexpr.5\ht\imageboxa-.5\height}
                       {\usebox{\imageboxc}}%
    \caption{The same node $\to$ different nodes.}%
    \label{f:network3}%
  \end{subfigure}%

  \caption{The only three possibilities of appending link $k$ to one
    of the routes in the network.}
  \label{f:nets}
\end{figure*}


Deriving labels $l'$ from label $l$, denoted by $l' \in l \oplus e$,
models appending edge $e$ to the path of label $l$.  Label $l$ and its
derived labels $l'$ differ by one trait only: the trait of the route
with a link appended.  For the cases in Figures \ref{f:network1} and
\ref{f:network2}, link $k$ is appended to route $r_a$ to yield
candidate labels as given by (\ref{e:candidate12}) for vertex $x =
(n_k, n_b)$ for Fig.~\ref{f:network1} and vertex $x = (n_b, n_b)$ for
Fig.~\ref{f:network2}.  For the case in Fig.~\ref{f:network3}, link
$k$ is appended to either of the routes $r_a$ and $r_a'$ to yield
candidate labels as given by (\ref{e:candidate3}) for vertex $x =
(n_a, n_k)$.

\begin{equation}
  l' \in l \oplus e = \{(t, t_b): t \in t_a \oplus k\}
  \label{e:candidate12}
\end{equation}

\begin{equation}
  \begin{split}
    l' \in l \oplus e =
    &\{(t, t_a'): t \in t_a \oplus k\}\text{ }\cup\\
    &\{(t_a, t): t \in t_a' \oplus k\}
  \end{split}
  \label{e:candidate3}
\end{equation}


Deriving traits $t'$ from trait $t$, denoted by $t' \in t \oplus k$,
models appending link $k$ to the route of trait $t$.  How a trait is
defined and derived depends on the network characteristics being
modeled; we only require Assumption \ref{a:trait} holds, where the
trait efficiency is given by Definition \ref{d:trait_efficiency}, and
$\preceq$ is the \emph{better than or equal to} relation.  For the
trait in optical networks, $\preceq$ is given by (\ref{e:preceqtrait})
and $\oplus$ by (\ref{e:oplustrait}), for which Assumption
\ref{a:trait} holds if $\oplus$ for cost is addition.

\begin{assumption}
  An inefficient trait yields inefficient traits:\\
  $t_i \preceq t_j \Rightarrow \forall t \in t_j \oplus k \; \exists
  t' \in t_i \oplus k: t' \preceq t$.
  \label{a:trait}
\end{assumption}

\begin{definition}[Trait efficiency]
Trait $t$ is efficient if, for a given node, there does not exist
trait $t'$ such that $t' \preceq t$.
\label{d:trait_efficiency}
\end{definition}

\begin{equation}
  t_i \preceq t_j \Leftrightarrow \cost(t_i) \le \cost(t_j) \text{ and }
  \RI(t_i) \supseteq \RI(t_j)
  \label{e:preceqtrait}
\end{equation}

\begin{equation}
  \begin{split}
    t' \in t \oplus k =
    \{(c, R): \text{ } & c = \cost(t) \oplus \cost(k) \text{ and }\\
    & R = \RI(t) \cap \AU(k)\}
  \end{split}
  \label{e:oplustrait}
\end{equation}


The \emph{Bellman principle of optimality} stipulates the optimal
substructure of a routing problem, known as the shortest-path tree
\cite{bellman}.  The \emph{generic principle of optimality} stipulates
the efficient substructure, termed the efficient-path tree in
\cite{10.1109/NOMS56928.2023.10154322}, but here we call it, more
precisely, the efficient-label tree.  The Bellman principle allows for
a single optimal label because ordering between labels is total;
suboptimal labels are discarded by relaxation as they are assumed
unable to yield optimal labels.  The generic principle allows for a
set of efficient traits (labels) because their ordering is partial;
inefficient traits (labels) are discarded by relaxation as they are
unable to yield efficient traits by Assumption \ref{a:trait} (labels
by Propositions \ref{p:label} and \ref{p:labelprime}).


Label efficiency is defined in terms of some relation $R$, as given by
Definition \ref{d:label_efficiency}.  We introduce relation $\preceq$
for labels, and call some labels $\preceq$-efficient.  Relation
$\preceq$ compares labels differently depending on whether the end
nodes of their routes are different or the same.  For different end
nodes, label $l_i = (t_{i, a}, t_{i, b})$ is better than or equal to
$l_j = (t_{j, a}, t_{j, b})$, denoted by $\preceq_{\ne}$, if the
traits of $l_i$ are better than or equal to the traits of $l_j$, as
given by (\ref{e:cmp!=}), otherwise the labels are
$\preceq_=$-incomparable, denoted by $\parallel$ in Table
\ref{t:l_rels}. For the same end nodes, the relation is $\preceq_=$.


\begin{definition}[Label R-efficiency]
Label $l$ is R-efficient if, for a given vertex, $\nexists l': l' R l$.
\label{d:label_efficiency}
\end{definition}


\input{tab_rels.tab}

\begin{equation}
  l_i \preceq_{\ne} l_j \Leftrightarrow t_{i, a} \preceq t_{j, a}
  \text{ and } t_{i, b} \preceq t_{j, b}
  \label{e:cmp!=}
\end{equation}

\section{A shortcoming and its correction}

In \cite{10.3390/e23091116}, for labels $l_i = (t_i, t'_i)$ and $l_j =
(t_j, t'_j)$ whose both routes terminate at the same node, the better
than or equal to relation (denoted there with $\le$) was flawed.

For sorted labels (i.e., with their traits sorted: $t_i \preceq t'_i$
and $t_j \preceq t'_j$), label $l_i$ was considered better than or
equal to $l_j$ if and only if the normal comparison $l_i \preceq_n
l_j$ held, as given by (\ref{e:normal}).  The cross comparison $l_i
\preceq_x l_j$, as given by (\ref{e:cross}), used with the normal
comparison (i.e., $l_i \preceq_n l_j$ or $l_i \preceq_x l_j$) would be
redundant because $\preceq_x$ would not hold if $\preceq_n$ did not
(Proposition \ref{t:redundant}).  Using $\preceq_x$ only would be
incorrect, because even if $\preceq_x$ does not hold, $\preceq_n$ can
hold (Proposition \ref{t:incorrect}).  All that is correct.

\begin{equation}
  l_i \preceq_n l_j \Leftrightarrow t_i \preceq t_j \text{ and } t'_i
  \preceq t'_j
  \label{e:normal}
\end{equation}

\begin{equation}
  l_i \preceq_x l_j \Leftrightarrow t_i \preceq t'_j \text{ and } t'_i
  \preceq t_j
  \label{e:cross}
\end{equation}


Incorrect, however, was using only $\preceq_n$ for unsorted labels
(i.e., when at least one label is unsorted).  In
\cite{10.3390/e23091116}, we assumed that we could use
(\ref{e:normal}) if for an unsorted label we swapped its traits, but
we were wrong as labels with incomparable traits cannot be sorted.
That is a shortcoming.


A correction: the normal or the cross comparison must hold, as given
by (\ref{e:cmp=}) and denoted by $\preceq_=$; no simpler comparison
exists for any labels (sorted or not) using $\preceq$ for traits.
There are only three comparisons possible: $\preceq_n$, $\preceq_x$
and $\preceq_=$.  Proposition \ref{t:redundant} for unsorted labels
does not hold so $\preceq_n$ cannot be used alone, and by Proposition
\ref{t:incorrect} comparison $\preceq_x$ cannot be used alone at all
(even for sorted labels).  Only (\ref{e:cmp=}) remains, provided we
can only use $\preceq$ for traits.

\begin{equation}
  l_i \preceq_= l_j \Leftrightarrow l_i \preceq_n l_j \text{ or } l_i
  \preceq_x l_j
  \label{e:cmp=}
\end{equation}

\section{Problem statement and resolution}


Prove that the generic Dijkstra algorithm can find two link-disjoint
routes of minimal cost that meet the resource continuity and
contiguity constrains.  Furthermore, show that the problem is
tractable if the route cost is unlimited.


We rely on the proof of correctness of generic Dijkstra in
\cite{10.1109/NOMS56928.2023.10154322}: that proof holds if the
problem has the efficient substructure, and that we claim for DDPP in
Lemma \ref{l:efficient}.  Furthermore, we show that $\preceq$ leads to
exponential complexity, and therefore introduce a new relation
$\preceq'$ that polynomially-binds the size of the search space.


In Propositions \ref{p:case1}, \ref{p:case2} and \ref{p:case3} for
$\preceq$ (and then in \ref{p:case1prime}, \ref{p:case2prime} and
\ref{p:case3prime} for $\preceq'$) we consider the only three cases
shown in Fig.~\ref{f:nets} that are modeled as in
Fig.~\ref{f:substructure} where two paths lead to vertex $w$: $p_i$
with efficient label $l_i$, and $p_j$ with inefficient label $l_j$.
Label $l_i$ yields candidate labels $l' \in l_i \oplus e$ when edge
$e$ is appended to path $p_i$, and likewise $l \in l_j \oplus e$ for
path $p_j$.  In the proofs, for brevity, we consider the link $k$
being added to the first route only; analogous reasoning holds for the
other route.  We show that for every label $l$, there exists $l'$ such
that $l'$ is better than or equal to $l$; the relation does not hold
for every $l'$ and $l$.

\subsection{Ordering $\preceq$}

\begin{figure}
  \centering
  \begin{tikzpicture}[node distance = 2.25 cm, ->]
    \node [node] (s) {$s$};
    \node [node, right of = s] (w) {$w$};

    \coordinate (mid) at ($(s)!0.5!(w)$);
    \node at ($(mid) + (0, 0.75)$) (pi) {$p_i$};
    \node at ($(mid) + (0, -0.75)$) (pj) {$p_j$};

    \node [node, right of = w] (x) {$x$};

    \draw
    (s) .. controls ($(s) + (-0.25, 1)$) and ($(pi) + (-0.5, -0.25)$) ..
    (pi) .. controls ($(pi) + (0.5, 0)$) and ($(w) + (-0.2, 0.75)$)..
    (w);

    \coordinate (pja) at ($(s) + (0.25, -0.5)$);
    \draw
    (s) .. controls ($(s) + (-0.1, -0.25)$) and ($(pja) + (-0.75, -0.75)$) ..
    (pja) .. controls ($(pja) + (0.25, 0.25)$) and ($(pj) + (-0.5, 0)$) ..
    (pj) .. controls ($(pj) + (1, 0.1)$) and ($(w) + (-1, -0.25)$)..
    (w);

    \draw (w) -- node [circle, fill = white] {$e$} (x);

  \end{tikzpicture}
  \caption{Efficient substructure.}
  \label{f:substructure}
\end{figure}
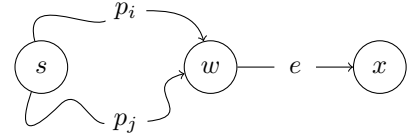

\begin{proposition}
  $l_i \preceq_{\ne} l_j \Rightarrow \forall l \in l_j \oplus e \;
  \exists l' \in l_i \oplus e: l' \preceq_{\ne} l$
  \label{p:case1}
\end{proposition}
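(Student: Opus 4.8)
The plan is to unfold the hypothesis and the definition of $\oplus$ on labels, and then reduce the entire statement to a single application of Assumption~\ref{a:trait} on the first trait. The guiding observation is that appending edge $e$ (i.e., link $k$) to the first route leaves the second trait of each label untouched, so only one coordinate actually moves.

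First I would expand the hypothesis $l_i \preceq_{\ne} l_j$ via (\ref{e:cmp!=}) into the two trait comparisons $t_{i,a} \preceq t_{j,a}$ and $t_{i,b} \preceq t_{j,b}$. I would then fix an arbitrary candidate $l \in l_j \oplus e$; because $e$ appends $k$ to the first route, (\ref{e:candidate}) forces $l = (t, t_{j,b})$ for some $t \in t_{j,a} \oplus k$, with the second trait carried over unchanged from $l_j$.

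The key step is to invoke Assumption~\ref{a:trait} on the first-trait inequality $t_{i,a} \preceq t_{j,a}$: applied to the particular $t \in t_{j,a} \oplus k$ appearing in $l$, it yields a trait $t' \in t_{i,a} \oplus k$ with $t' \preceq t$. Setting $l' = (t', t_{i,b})$, which lies in $l_i \oplus e$ by (\ref{e:candidate}), and combining $t' \preceq t$ with the already-available $t_{i,b} \preceq t_{j,b}$, relation (\ref{e:cmp!=}) delivers $l' \preceq_{\ne} l$, which is exactly the claim.

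I do not expect a genuine obstacle here; the proposition is essentially Assumption~\ref{a:trait} lifted from traits to labels in the case where only one coordinate moves. The one point requiring care is the remark that appending to the first route freezes the second trait, so the second-coordinate comparison is supplied directly by the hypothesis while only the first coordinate needs the assumption. It is also worth noting, in line with the comment following the statement, that the converse existential claim (that for every $l'$ there is a dominated $l$) is not asserted and would in fact fail, since Assumption~\ref{a:trait} is one-directional; the relation indeed \emph{does not hold for every} $l'$ and $l$, which is why the existential quantifier on $l'$ is essential.
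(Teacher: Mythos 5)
Your proof is correct and takes essentially the same approach as the paper's: unfold $l_i \preceq_{\ne} l_j$ via (\ref{e:cmp!=}) into the two trait comparisons, observe via (\ref{e:candidate}) that appending $e$ moves only the first trait while the second is carried over, and apply Assumption~\ref{a:trait} to $t_{i,a} \preceq t_{j,a}$ to obtain the dominating $t' \preceq t$, which together with the inherited $t_{i,b} \preceq t_{j,b}$ gives $l' \preceq_{\ne} l$. You are merely a bit more explicit than the paper about the final combination step.
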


\begin{proof}
  For $l_i = (t_{i, a}, t_{i, b})$, $l_j = (t_{j, a}, t_{j, b})$,
  derived labels are $l' \in \{(t', t_{i, b}): t' \in t_{i, a} \oplus
  k\}$ and $l \in \{(t, t_{j, b}): t \in t_{j, a} \oplus k\}$.  Since
  $l_i \preceq_{\ne} l_j$, relations $t_{i, a} \preceq t_{j, a}$ and
  $t_{i, b} \preceq t_{j, b}$ hold.  Since $t_{i, a} \preceq t_{j,
    a}$, for every $t$ there is $t'$ such that $t' \preceq t$ by
  Assumption \ref{a:trait}, so for every $l$ there is $l'$ such that
  $l' \preceq_{\ne} l$.
\end{proof}

\begin{proposition}
  $l_i \preceq_{\ne} l_j \Rightarrow \forall l \in l_j \oplus e \;
  \exists l' \in l_i \oplus e: l' \preceq_= l$
  \label{p:case2}
\end{proposition}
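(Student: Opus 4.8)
The plan is to reuse the argument of Proposition \ref{p:case1} almost verbatim, changing only the target relation. Here the appended link $k$ joins $n_a$ to $n_b$ (Fig.~\ref{f:network2}), so after appending both routes terminate at the same node and the derived labels must be compared with $\preceq_=$ rather than $\preceq_{\ne}$. First I would record the derived labels exactly as before: $l' = (t', t_{i, b})$ with $t' \in t_{i, a} \oplus k$, and $l = (t, t_{j, b})$ with $t \in t_{j, a} \oplus k$, since appending $k$ to the first route leaves the second trait of each label untouched.

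Next, from $l_i \preceq_{\ne} l_j$ I would extract via (\ref{e:cmp!=}) the relations $t_{i, a} \preceq t_{j, a}$ and $t_{i, b} \preceq t_{j, b}$, and then apply Assumption \ref{a:trait} to the first of these to obtain, for every $t \in t_{j, a} \oplus k$, some $t' \in t_{i, a} \oplus k$ with $t' \preceq t$. The key observation is that the normal comparison $\preceq_n$ of (\ref{e:normal}) has exactly the same component-wise form as $\preceq_{\ne}$ of (\ref{e:cmp!=}): both require the first traits to satisfy $\preceq$ and the second traits to satisfy $\preceq$. Therefore $t' \preceq t$ together with $t_{i, b} \preceq t_{j, b}$ gives precisely $l' \preceq_n l$, and since $\preceq_n$ is one of the two disjuncts of (\ref{e:cmp=}), I conclude $l' \preceq_= l$.

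I expect no genuine obstacle: the entire content is recognizing that Case 2 differs from Case 1 only in the comparison relation attached to the terminal node, and that the $\preceq_n$ branch of $\preceq_=$ reproduces the $\preceq_{\ne}$ condition. In particular the witness $l'$ lands in the normal branch automatically, so the cross comparison $\preceq_x$ of (\ref{e:cross}) is never invoked.
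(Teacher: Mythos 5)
Your proof is correct and follows essentially the same route as the paper, whose proof of this proposition is literally ``As for Proposition~\ref{p:case1}'': you rerun that argument to get $t' \preceq t$ and $t_{i,b} \preceq t_{j,b}$, observe this is exactly $l' \preceq_n l$, and conclude $l' \preceq_= l$ via the normal disjunct of (\ref{e:cmp=}). Your version merely makes explicit the step the paper leaves implicit, namely that the component-wise conclusion lands in the $\preceq_n$ branch so the cross comparison is never needed.
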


\begin{proof}
  For $l_i = (t_{i, a}, t_{i, b})$, $l_j = (t_{j, a}, t_{j, b})$,
  derived labels are $l' \in \{(t', t_{i, b}): t' \in t_{i, a} \oplus
  k\}$, and $l \in \{(t, t_{j, b}): t \in t_{j, a} \oplus k\}$.  Since
  $l_i \preceq_{\ne} l_j$, relations $t_{i, a} \preceq t_{j, a}$ and
  $t_{i, b} \preceq t_{j, b}$ hold.  Since $t_{i, a} \preceq t_{j,
    a}$, for every $t$, there exists $t'$ such that $t' \preceq t$ by
  Assumption \ref{a:trait}, so for every $l$ there exists $l'$ such
  that $l' \preceq_n l$.
\end{proof}

\begin{proposition}
  $l_i \preceq_= l_j \Rightarrow \forall l \in l_j \oplus e \;
  \exists l' \in l_i \oplus e: l' \preceq_{\ne} l$
  \label{p:case3}
\end{proposition}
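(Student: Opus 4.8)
The plan is to split the hypothesis $l_i \preceq_= l_j$ along its definition (\ref{e:cmp=}) into the normal case $l_i \preceq_n l_j$ and the cross case $l_i \preceq_x l_j$, and to dispatch each by a single appeal to Assumption \ref{a:trait}. Write $l_i = (t_{i,a}, t_{i,b})$ and $l_j = (t_{j,a}, t_{j,b})$, both labels sitting at the same-node vertex $w$. Appending edge $e$ extends the first route over link $k$, so the derived label of $l_j$ is $l = (t, t_{j,b})$ with $t \in t_{j,a} \oplus k$, now reaching a vertex whose two nodes differ. The freedom I will exploit is that, because both routes of $l_i$ end at the common node, extending \emph{either} of them over $k$ lands at that same target vertex, so both kinds of extension are available as $l' \in l_i \oplus e$.

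In the normal case, $t_{i,a} \preceq t_{j,a}$ and $t_{i,b} \preceq t_{j,b}$ by (\ref{e:normal}). I extend the first route of $l_i$ to obtain $l' = (t', t_{i,b})$ with $t' \in t_{i,a} \oplus k$. Assumption \ref{a:trait} applied to $t_{i,a} \preceq t_{j,a}$ furnishes, for every $t$, some $t'$ with $t' \preceq t$; together with $t_{i,b} \preceq t_{j,b}$ this gives $l' \preceq_{\ne} l$ through (\ref{e:cmp!=}). This is the argument of Proposition \ref{p:case1} verbatim.

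The cross case $l_i \preceq_x l_j$, where $t_{i,a} \preceq t_{j,b}$ and $t_{i,b} \preceq t_{j,a}$ by (\ref{e:cross}), is the crux. Extending the first route of $l_i$ fails, because the only inequality Assumption \ref{a:trait} could consume, $t_{i,b} \preceq t_{j,a}$, pairs the second trait of $l_i$ with the first of $l_j$, whereas $l$ was produced from the first trait of $l_j$. Instead I extend the second route of $l_i$, yielding $l' = (t_{i,a}, t'')$ with $t'' \in t_{i,b} \oplus k$; since the target vertex has distinct nodes, this label serves with its freshly extended trait $t''$ aligned against $t$ and its retained trait $t_{i,a}$ aligned against $t_{j,b}$. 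Then $t_{i,b} \preceq t_{j,a}$ and Assumption \ref{a:trait} supply, for every $t \in t_{j,a} \oplus k$, some $t''$ with $t'' \preceq t$, while $t_{i,a} \preceq t_{j,b}$ settles the other coordinate, so $l' \preceq_{\ne} l$ by (\ref{e:cmp!=}) once more.

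The main obstacle I foresee is justifying, cleanly and once, the re-indexing in the cross case: that extending the second rather than the first route of $l_i$ still produces a bona fide member of $l_i \oplus e$ at the same target vertex, so that matching it against $l$ is legitimate. This rests on the symmetry of the same-node vertex $w$, the very feature that forces $\preceq_=$ to be a disjunction of $\preceq_n$ and $\preceq_x$ in the first place; once that symmetry licenses the swap, both cases close by the same application of Assumption \ref{a:trait} and (\ref{e:cmp!=}).
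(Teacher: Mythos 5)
Your proposal is correct and takes essentially the same route as the paper: the paper's proof of Proposition~\ref{p:case3} is the one-liner ``as for Proposition~\ref{p:case1} with either $\preceq_n$ or $\preceq_x$ holding,'' i.e., a case split on (\ref{e:cmp=}) followed by the Assumption~\ref{a:trait} argument in each branch, which is exactly your structure. Your only addition is to spell out the detail the paper leaves implicit: in the cross case one must append link $k$ to the \emph{other} route of $l_i$ (legitimate because both routes of $l_i$ terminate at the same node, so either extension lands at the same vertex $x$), and this is precisely the intended reading of the paper's terse proof.
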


\begin{proof}
  For labels $l_i = (t_i, t'_i)$, $l_j = (t_j, t'_j)$, derived labels
  are $l' \in \{(t', t'_i): t' \in t_i \oplus k\}$, and $l \in \{(t,
  t'_j): t \in t_j \oplus k\}$.  We assume $\preceq_=$ holds because
  $\preceq_n$ does; analogous reasoning holds for $\preceq_x$.  Since
  $l_i \preceq_n l_j$, relations $t_i \preceq t_j$ and $t'_i \preceq
  t'_j$ hold.  Since $t_i \preceq t_j$, for every $t$, there exists
  $t'$ such that $t' \preceq t$ by Assumption \ref{a:trait}, so for
  every $l$ there exists $l'$ such that $l' \preceq_{\ne} l$.
\end{proof}

\begin{proposition}
  An $\preceq$-inefficient label yields $\preceq$-inefficient labels:
  $l_i \preceq l_j \Rightarrow \forall l \in l_j \oplus e \; \exists
  l' \in l_i \oplus e: l' \preceq l$.
  \label{p:label}
\end{proposition}

\begin{proof}
  Follows from Propositions \ref{p:case1}, \ref{p:case2}, and
  \ref{p:case3}.
\end{proof}


The problem is that $\preceq_=$ leads to exponential worst-case
complexity.  In Fig.~\ref{f:lobe}, two routes from node $n_s$ to node
$n_x$ meet at $m$ intermediate nodes, which corresponds to finding a
path in a graph from vertex $s$ to vertex $x$.  The upper subroutes
have cost 0, the lower the consecutive powers of 2.  Labels $l_i$ and
$l_j$ are $\preceq_=$-incomparable if neither $l_i \preceq_= l_j$ nor
$l_i \succeq_= l_j$ holds.  Labels $l_i$ and $l_j$ are
$\preceq_=$-incomparable if one trait of $l_i$ is better than one
trait of $l_j$ while the other trait of $l_i$ is worse than the other
trait of $l_j$.  Since relation $\prec$ considers a trait of lower
cost better, then the number of $\preceq_=$-incomparable labels vertex
$x$ can have is $2^m$, i.e., labels of costs $(0, 2 ^ {(m + 1)} - 1)$,
$(1, 2 ^ {(m + 1)} - 2)$, \ldots, $(2^m - 1, 2^m)$, just as in the
partition problem.  In the worst case, routes would go through all
possible intermediate nodes and the subroutes would have a single
link.  The generic Dijkstra algorithm would keep these incomparable
labels and would find a correct path even when the route cost is
limited, albeit at the exponential complexity.

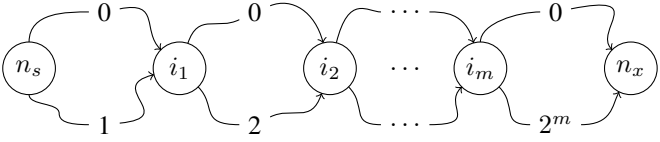
\begin{figure}
  \centering
  \begin{tikzpicture}[node distance = 2 cm, ->,
      outer sep = 0]
    \node [node] (ns) {$n_s$};
    \node [node, right of = s] (n1) {$i_1$};
    \node [node, right of = n1] (n2) {$i_2$};
    \node [node, right of = n2] (nm) {$i_m$};
    \node [node, right of = nm] (nx) {$n_x$};

    \coordinate (m1) at ($(ns)!0.5!(n1)$);
    \node at ($(m1) + (0, 0.75)$) (m1u) {0};
    \node at ($(m1) + (0, -0.75)$) (m1l) {1};

    \draw
    (ns) .. controls ($(ns) + (0, 0.75)$) and ($(m1u) + (-0.5, 0)$) ..
    (m1u) .. controls ($(m1u) + (0.75, 0.15)$) and ($(n1) + (-0.5, 0.5)$)..
    (n1);

    \coordinate (sx) at ($(ns) + (0.4, -0.6)$);
    \draw
    (ns) .. controls ($(ns) + (0, -0.5)$) and ($(sx) + (0, 0.15)$) ..
    (sx) .. controls ($(sx) + (0, -0.1)$) and ($(m1l) + (-0.5, 0)$) ..
    (m1l) .. controls ($(m1l) + (1, 0.1)$) and ($(n1) + (-1, -0.25)$)..
    (n1);

    \coordinate (m2) at ($(n1)!0.5!(n2)$);
    \node at ($(m2) + (0, 0.75)$) (m2u) {0};
    \node at ($(m2) + (0, -0.75)$) (m2l) {2};

    \draw
    (n1) .. controls ($(n1) + (0.25, 0.75)$) and ($(m2u) + (-0.5, -0.25)$) ..
    (m2u) .. controls ($(m2u) + (0.5, 0.2)$) and ($(n2) + (-0.2, 0.75)$)..
    (n2);

    \draw
    (n1) .. controls ($(n1) + (0.5, -0.5)$) and ($(m2l) - (0.75, 0)$) ..
    (m2l) .. controls ($(m2l) + (0.5, 0.25)$) and ($(n2) - (0.25, 0.75)$)..
    (n2);

    \node at ($(n2)!0.5!(nm)$) (mm) {$\ldots$};
    \node at ($(mm) + (0, 0.75)$) (mmu) {$\ldots$};
    \node at ($(mm) + (0, -0.75)$) (mml) {$\ldots$};

    \draw
    (n2) .. controls ($(n2) + (0.25, 1)$) and ($(mmu) + (-0.5, 0)$) ..
    (mmu) .. controls ($(mmu) + (0.5, 0)$) and ($(nm) + (-0.2, 0.75)$)..
    (nm);

    \draw
    (n2) .. controls ($(n2) + (0.5, -0.5)$) and ($(mml) - (0.75, 0)$) ..
    (mml) .. controls ($(mml) + (1, 0)$) and ($(nm) - (0.5, 0.5)$)..
    (nm);

    \coordinate (mt) at ($(nm)!0.5!(nx)$);
    \node at ($(mt) + (0, 0.75)$) (mtu) {0};
    \node at ($(mt) + (0, -0.75)$) (mtl) {$2^m$};

    \draw
    (nm) .. controls ($(nm) + (0, 0.6)$) and ($(mtu) + (-0.5, 0)$) ..
    (mtu) .. controls ($(mtu) + (1.25, 0.25)$) and ($(nx) + (-0.5, 0.5)$) ..
    (nx);

    \draw
    (nm) .. controls ($(nm) + (0.5, -0.5)$) and ($(mtl) - (0.75, 0)$) ..
    (mtl) .. controls ($(mtl) + (0.75, 0)$) and ($(nx) - (0.25, 0.5)$)..
    (nx);

  \end{tikzpicture}
  \caption{The worst case of exponential complexity.}
  \label{f:lobe}
\end{figure}


\subsubsection{Example}

We demonstrate the efficient substructure and the worst-case
complexity of $\preceq_=$ with an example optical network in
Fig.~\ref{f:en}, where links are labeled with their name, cost and
available units.  We search for a pair of link-disjoint routes of
minimal cost meeting the spectrum constraints from $n_1$ to $n_3$.
The corresponding graph is shown in Fig.~\ref{f:eg}, where edges are
labeled with their name and the link being appended.


Table \ref{t:el1} lists the considered labels (both the efficient and
inefficient), their vertex, what they were yielded by, and the reason
for dropping the inefficient labels.  Fig.~\ref{f:et1} represents the
$\preceq$-efficient labels as a tree.  For example, there are two
labels for vertex $(n_1, n_2)$ yielded by label $l_1$: label $l_2$ for
edge $e_1$, and $l_3$ for $e_2$.  Label $l_2$ is $\preceq$-efficient
(and forms the efficient-label tree) but label $l_3$ is not because
$l_2 \preceq l_3$.


Since label $l_3$ is $\preceq_{\ne}$-inefficient, it can only yield
$\preceq$-inefficient labels.  Labels yielded for vertex $(n_1, n_3)$:
$l_4 \preceq_{\ne} ((0, \Omega), (1, \RIO{5}{7})) \in l_3 \oplus e_3$
and $l_5 \preceq_{\ne} ((0, \Omega), (3, \RIO{0}{2})) \in l_3 \oplus
e_4$, as by Proposition \ref{p:case1}.  A label yielded for vertex
$(n_2, n_2)$: $l_7 \preceq_= ((0, \RIO{0}{9}), (1, \RIO{0}{9})) \in
l_3 \oplus e_5$, as by Proposition \ref{p:case2}.  We could take the
previous $\preceq_=$-inefficient label yielded by $l_3 \oplus e_5$
(i.e., $((0, \RIO{0}{9}), (1, \RIO{0}{9}))$) to yield further
$\preceq_{\ne}$-inefficient labels for vertex $(n_2, n_3)$, as by
Proposition \ref{p:case3}.


For vertex $(n_3, n_3)$, having labels $l_{15}$ and $l_{17}$, we can
limit a route cost to 2, and find the required label: $l_{15}$.
However, if the cost of a route is unlimited, using relation
$\preceq_=$ is overkill because only one of these two labels should be
kept (either $l_{15}$ or $l_{17}$).

\begin{figure}
  \centering
  \begin{tikzpicture}[->, node distance = 2.7 cm, outer sep = 0]
    \node [node] (n1) {$n_1$};
    \node [node, right of = n1] (n2) {$n_2$};
    \node [node, right of = n2] (n3) {$n_3$};

    \begin{scope}[every node/.style = {circle, fill = white,
          align = center, inner sep = 0}]
      \draw (n1) edge [bend left = 60]
      node {$k_1$\\$0, \RIO{0}{9}$} (n2);

      \draw (n1) edge [bend left = -60]
      node {$k_2$\\$1, \RIO{0}{9}$} (n2);

      \draw (n2) edge [bend left = 60]
      node {$k_3$\\$0, \RIO{5}{7}$} (n3);

      \draw (n2) edge [bend left = -60]
      node {$k_4$\\$2, \RIO{0}{2}$} (n3);
    \end{scope}

  \end{tikzpicture}
  \caption{The example network.}
  \label{f:en}
\end{figure}
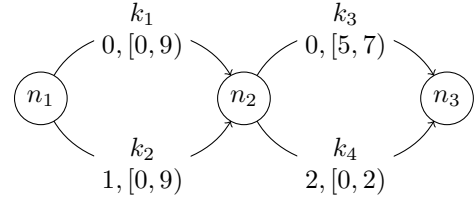


\newcommand{\edgelabel}[2]{#1, #2}

\begin{figure}
  \centering
  \begin{tikzpicture}[->, > = stealth]


    \begin{scope}[every node/.style = {draw, circle},
        node distance = \gd, on grid]

      \node (n11) {$(n_1, n_1)$};
      \node [right of = n11] (n12) {$(n_1, n_2)$};
      \node [right of = n12] (n13) {$(n_1, n_3)$};
      \node [below of = n12] (n22) {$(n_2, n_2)$};
      \node [below of = n13] (n23) {$(n_2, n_3)$};
      \node [below of = n23] (n33) {$(n_3, n_3)$};

    \end{scope}


    \begin{scope}[every node/.style = {rectangle, fill = white,
          align = center}]
      \begin{scope}[every edge/.style = {draw, bend left = 30}]
        \draw (n11) edge node {\edgelabel{$e_1$}{$k_1$}} (n12);
        \draw (n12) edge node {\edgelabel{$e_3$}{$k_3$}} (n13);
        \draw (n22) edge node {\edgelabel{$e_9$}{$k_3$}} (n23);
        \draw (n12) edge node {\edgelabel{$e_6$}{$k_2$}} (n22);
        \draw (n13) edge node {\edgelabel{$e_8$}{$k_2$}} (n23);
        \draw (n23) edge node {\edgelabel{$e_{12}$}{$k_4$}} (n33);
      \end{scope}

      \begin{scope}[every edge/.style = {draw, bend left = -30}]
        \draw (n11) edge node {\edgelabel{$e_2$}{$k_2$}} (n12);
        \draw (n12) edge node {\edgelabel{$e_4$}{$k_4$}} (n13);
        \draw (n22) edge node {\edgelabel{$e_{10}$}{$k_4$}} (n23);
        \draw (n12) edge node {\edgelabel{$e_5$}{$k_1$}} (n22);
        \draw (n13) edge node {\edgelabel{$e_7$}{$k_1$}} (n23);
        \draw (n23) edge node {\edgelabel{$e_{11}$}{$k_3$}} (n33);
      \end{scope}
    \end{scope}

  \end{tikzpicture}
  \caption{The example graph.}
  \label{f:eg}
\end{figure}
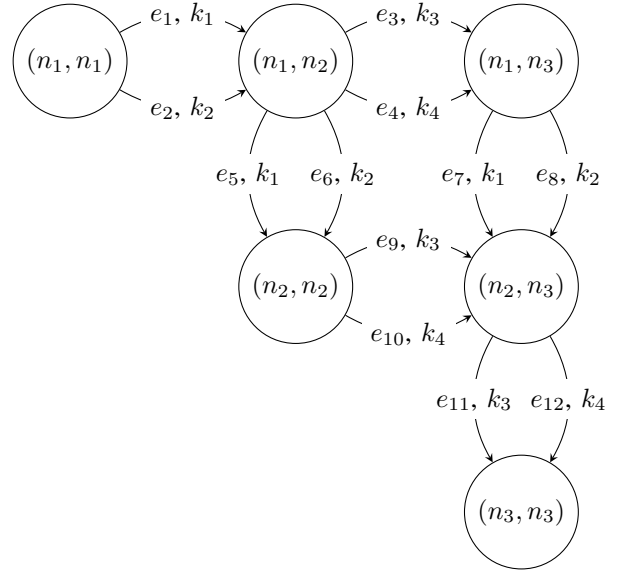

\input{tab_labels1.tab}



\begin{figure}
  \centering
  \begin{tikzpicture}[> = stealth]

    \begin{scope}[every node/.style = {rectangle split, draw,
          rectangle split ignore empty parts, anchor = north,
          minimum width = 1 cm}, node distance = \gd]

      \node (n11) {$l_1$};
      \node [right of = n11] (n12) {$l_2$};
      \node [right of = n12] (n13) {$l_4$\nodepart{two}$l_5$};
      \node [below of = n12] (n22) {$l_7$};
      \node [below of = n13] (n23) {$l_9$\nodepart{two}
        $l_{11}$\nodepart{three}$l_{12}$\nodepart{four}$l_{13}$};
      \node [below of = n23] (n33) {$l_{15}$\nodepart{two}$l_{17}$};

    \end{scope}

    \node[above = \dan of n11.north] {$(n_1, n_1)$};
    \node[above = \dan of n12.north] (n12l) {$(n_1, n_2)$};
    \node[above = \dan of n13.north] (n13l) {$(n_1, n_3)$};
    \node[above = \dan of n22.north] {$(n_2, n_2)$};
    \node[above = \dan of n23.north] (n23l) {$(n_2, n_3)$};
    \node[above = \dan of n33.north] (n33l) {$(n_3, n_3)$};


    \node (n13n23a) at ($(n13.south)!0.5!(n23l)$) {$e_8$};

    \coordinate (n23n33) at ($(n23.south)!0.45!(n33l)$);
    \node (n23n33a) at ($(n23n33) + (0, 0.15)$) {$e_{12}$};
    \node (n23n33b) at ($(n23n33) - (0, 0.15)$) {$e_{12}$};

    \begin{scope}[->, every node/.style = {circle, fill = white,
          inner sep = 1 pt}]

      \draw (n11.text east) to [out = 0, in = 180]
      node {$e_1$} (n12.text west);

      \path [name path = ver23] ($(n12l)!0.5!(n13l)$)
      -- ($(n22)!0.5!(n33)$);

      \draw [name path = n12n13a] (n12.text east)
      to [controls = +(0:0.5) and +(180:2)] (n13.text west);
      \path [name intersections = {of = n12n13a and ver23}]
        node at (intersection-1) {$e_3$};

      \draw [name path = n12n13b] (n12.text east)
      to [controls = +(0:0.5) and +(180:2)] (n13.two west);
      \path [name intersections = {of = n12n13b and ver23}]
        node at (intersection-1) {$e_4$};
      
      \draw [name path = n12n22] (n12.text east)
      to [controls = +(0:1.25) and +(180:2)] (n22.text west);
      \path [name path = n12--n22] (n12) -- (n22);
      \path [name intersections = {of = n12n22 and n12--n22}]
      node at (intersection-1) {$e_6$};

      \draw[-] (n13.text east) to [controls = +(0:0.75) and +(10:1)]
      (n13n23a);
      \draw (n13n23a) to [controls = +(190:1) and +(180:0.75)]
      (n23.text west);

      \draw [name path = n22n23a] (n22.text east)
      to [controls = +(0:0.5) and +(180:1.5)] (n23.two west);
      \path [name intersections = {of = n22n23a and ver23}]
      node at (intersection-1) {$e_9$};

      \draw [name path = n22n23b] (n22.text east)
      to [controls = +(0:0.5) and +(180:1.5)] (n23.three west);
      \path [name intersections = {of = n22n23b and ver23}]
      node at (intersection-1) {$e_{10}$};

      \draw [name path = n22n23c] (n22.text east)
      to [controls = +(0:0.5) and +(180:1.5)] (n23.four west);
      \path [name intersections = {of = n22n23c and ver23}]
      node at (intersection-1) {$e_{10}$};

      \draw[-] (n23.text east) to [controls = +(0:0.75) and +(10:1)]
      (n23n33a);
      \draw (n23n33a) to [controls = +(190:1) and +(180:0.75)]
      (n33.text west);
      \draw[-] (n23.two east) to [controls = +(0:0.75) and +(10:1)]
      (n23n33b);
      \draw (n23n33b) to [controls = +(190:1) and +(180:0.75)]
      (n33.two west);

    \end{scope}

  \end{tikzpicture}
  \caption{The $\preceq$-efficient-label tree.}
  \label{f:et1}
\end{figure}

\subsection{Ordering $\preceq'$}


Luckily, for optical networks with unlimited route cost we can
introduce a new relation $\preceq'$ that does not cause exponential
complexity: $\preceq'_{\ne}$ and $\preceq'_=$ for labels whose routes
end at different or the same nodes, respectively.  When comparing
labels, trait costs can be considered equivalent if the label costs
are equal.  For instance, a label cost can be a sum of its trait
costs.  Relation $\prec'$ should find better a label of lower cost and
better resources, which combined with the equivalence relation (so
that equivalent labels can be discarded) is given by
(\ref{e:cmp!=prime}) and (\ref{e:cmp=prime}).  Resource $\RI(l)$ of
label $l$ is a pair of trait resources.  Resource inclusion relations
are given by (\ref{e:RI!=}-\ref{e:RIcross}).


For the cost $\cost(l)$ of label $l$, Assumption \ref{a:cost} should
hold.  It does for optical networks, where a label cost can be the sum
of trait costs and a trait cost is the route length which in turn is
the sum of link costs.  A trait cost can also be the product of route
length and modulation coefficient required for that length, since the
coefficient increases with length.

\begin{assumption}
  A label of higher or equal cost yields labels of higher or equal
  cost: $\cost(l_i) \le \cost(l_j) \Rightarrow \forall l' \in l_i
  \oplus e \; \forall l \in t_j \oplus e:\cost(l') \le \cost(l)$.
  \label{a:cost}
\end{assumption}

\begin{equation}
  l_i \preceq'_{\ne} l_j \Leftrightarrow \cost(l_i) \le \cost(l_j)
  \text{ and } \RI(l_i) \supseteq_{\ne} \RI(l_j)
  \label{e:cmp!=prime}
\end{equation}

\begin{equation}
  l_i \preceq'_= l_j \Leftrightarrow \cost(l_i) \le \cost(l_j) \text{
    and } \RI(l_i) \supseteq_= \RI(l_j)
  \label{e:cmp=prime}
\end{equation}

\begin{equation}
  \begin{split}
    \RI(l_i) \supseteq_{\ne} \RI(l_j) \Leftrightarrow
    &\RI(t_{i, a}) \supseteq \RI(t_{j, a}) \text{ and }\\
    &\RI(t_{i, b}) \supseteq \RI(t_{j, b})
  \end{split}
  \label{e:RI!=}
\end{equation}

\begin{equation}
  \begin{split}
    \RI(l_i) \supseteq_= \RI(l_j) \Leftrightarrow
    &\RI(l_i) \supseteq_n \RI(l_j) \text{ or }\\
    &\RI(l_i) \supseteq_x \RI(l_j)
  \end{split}
  \label{e:RI=}
\end{equation}

\begin{equation}
  \begin{split}
    \RI(l_i) \supseteq_n \RI(l_j) \Leftrightarrow
    &\RI(t_i) \supseteq \RI(t_j) \text{ and }\\
    &\RI(t'_i) \supseteq \RI(t'_j)
  \end{split}
  \label{e:RInormal}
\end{equation}

\begin{equation}
  \begin{split}
    \RI(l_i) \supseteq_x \RI(l_j) \Leftrightarrow
    &\RI(t_i) \supseteq \RI(t'_j) \text{ and }\\
    &\RI(t'_i) \supseteq \RI(t_j)
  \end{split}
  \label{e:RIcross}
\end{equation}

\begin{proposition}
  $l_i \preceq'_{\ne} l_j \Rightarrow \forall l \in l_j \oplus e \;
  \exists l' \in l_i \oplus e: l' \preceq'_{\ne} l$
  \label{p:case1prime}
\end{proposition}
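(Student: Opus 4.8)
The plan is to mirror the proof of Proposition \ref{p:case1}, but to split the relation $\preceq'_{\ne}$ into its two conjuncts and discharge each with its own tool. As in that proof, I would append link $k$ to the first route, so that for $l_i = (t_{i,a}, t_{i,b})$ and $l_j = (t_{j,a}, t_{j,b})$ the derived labels are $l' \in \{(t', t_{i,b}) : t' \in t_{i,a} \oplus k\}$ and $l \in \{(t, t_{j,b}) : t \in t_{j,a} \oplus k\}$, so that only the first trait changes. From the hypothesis $l_i \preceq'_{\ne} l_j$ together with (\ref{e:cmp!=prime}) and (\ref{e:RI!=}) I would extract the three facts I need: $\cost(l_i) \le \cost(l_j)$, $\RI(t_{i,a}) \supseteq \RI(t_{j,a})$, and $\RI(t_{i,b}) \supseteq \RI(t_{j,b})$.

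The cost conjunct is immediate. Assumption \ref{a:cost} turns $\cost(l_i) \le \cost(l_j)$ into $\cost(l') \le \cost(l)$ for every derived pair, so whichever $l'$ I end up selecting will satisfy the cost requirement of $\preceq'_{\ne}$. The entire burden therefore falls on the resource conjunct (\ref{e:RI!=}): given $l = (t, t_{j,b})$ I must exhibit a $t' \in t_{i,a} \oplus k$ with $\RI(t') \supseteq \RI(t)$, since the second-trait inclusion $\RI(t_{i,b}) \supseteq \RI(t_{j,b})$ is inherited unchanged from the hypothesis.

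The hard part will be establishing the resource analog of Assumption \ref{a:trait}: that $\RI(t_{i,a}) \supseteq \RI(t_{j,a})$ implies that for every $t \in t_{j,a} \oplus k$ there exists a $t' \in t_{i,a} \oplus k$ with $\RI(t') \supseteq \RI(t)$. I would argue this from the concrete interval model of \cite{10.1109/NOMS56928.2023.10154322}, in which appending $k$ intersects the route's contiguous interval with the units free on $k$ and decomposes the result into maximal contiguous pieces, one per candidate trait. Writing $F_k$ for the units free on $k$, monotonicity of intersection gives $\RI(t) \subseteq \RI(t_{i,a}) \cap F_k$; because $\RI(t)$ is itself an interval it lies within a single maximal interval of $\RI(t_{i,a}) \cap F_k$, and that interval is the resource of some $t' \in t_{i,a} \oplus k$. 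Choosing that $t'$ yields $\RI(l') \supseteq_{\ne} \RI(l)$, which combined with the cost conjunct gives $l' \preceq'_{\ne} l$; appending to the second route is analogous. I would close, as the statement records, by noting the implication is one-directional: a candidate $l'$ whose new piece is resource-incomparable to $\RI(t)$ need not dominate $l$, so the relation fails for every $l'$ and $l$.
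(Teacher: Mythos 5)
Your proof is correct and takes essentially the same approach as the paper's: it splits $\preceq'_{\ne}$ into the cost conjunct, discharged by Assumption~\ref{a:cost}, and the resource conjunct, discharged by exhibiting $t' \in t_{i,a} \oplus k$ with $\RI(t') \supseteq \RI(t)$ from the hypothesis inclusion $\RI(t_{i,a}) \supseteq \RI(t_{j,a})$, with the second-trait inclusion carried over unchanged. Your explicit interval-model argument (intersecting with the free units $F_k$ and using contiguity of $\RI(t)$ to place it inside a single maximal interval of $\RI(t_{i,a}) \cap F_k$) just spells out the step the paper compresses into its terse ``because $\RI(t') \subseteq \RI(t_{i,a})$, $\RI(t) \subseteq \RI(t_{j,a})$ and $\RI(t_{i,a}) \supseteq \RI(t_{j,a})$'' clause.
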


\begin{proof}
  For $l_i = (t_{i, a}, t_{i, b})$, $l_j = (t_{j, a}, t_{j, b})$,
  derived labels are $l' \in \{(t', t_{i, b}): t' \in t_{i, a} \oplus
  k\}$, and $l \in \{(t, t_{j, b}): t \in t_{j, a} \oplus k\}$.  For
  $l' \preceq'_{\ne} l$ to hold, $\cost(l') \le \cost(l)$ and $\RI(l')
  \supseteq_{\ne} \RI(l)$ have to hold.  By Assumption \ref{a:cost},
  $\cost(l') \le \cost(l)$ holds.  Since $l_i \preceq'_{\ne} l_j$,
  relations $\RI(t_{i, a}) \supseteq \RI(t_{j, a})$ and $\RI(t_{i, b})
  \supseteq \RI(t_{j, b})$ hold.  For every $t \in t_{j, a} \oplus k$,
  there exists $t' \in t_{i, a} \oplus k$ such that $\RI(t') \supseteq
  \RI(t)$ because $\RI(t') \subseteq \RI(t_{i, a})$, $\RI(t) \subseteq
  \RI(t_{j, a})$ and $\RI(t_{i, a}) \supseteq \RI(t_{j, a})$, and thus
  $\RI(l') \supseteq_{\ne} \RI(l)$ holds.
\end{proof}

\begin{proposition}
  $l_i \preceq'_{\ne} l_j \Rightarrow \forall l \in l_j \oplus e \;
  \exists l' \in l_i \oplus e: l' \preceq'_= l$
  \label{p:case2prime}
\end{proposition}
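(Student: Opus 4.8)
The plan is to mirror the proof of Proposition~\ref{p:case1prime} and then observe that the resource inclusion it produces is exactly the \emph{normal} inclusion $\supseteq_n$, which is one of the two disjuncts of $\supseteq_=$. First I would take the same derived labels $l' \in \{(t', t_{i, b}) : t' \in t_{i, a} \oplus k\}$ and $l \in \{(t, t_{j, b}) : t \in t_{j, a} \oplus k\}$, obtained by appending link $k$ to the first route. The only difference from Proposition~\ref{p:case1prime} is the geometry of the case: here appending $k$ makes both routes terminate at the same node (Fig.~\ref{f:network2}), so $l'$ and $l$ are same-node labels and must be compared with $\preceq'_=$ rather than $\preceq'_{\ne}$.

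Unpacking the goal through (\ref{e:cmp=prime}), I need $\cost(l') \le \cost(l)$ together with $\RI(l') \supseteq_= \RI(l)$. The cost condition is handled verbatim as before: since $l_i \preceq'_{\ne} l_j$ gives $\cost(l_i) \le \cost(l_j)$, Assumption~\ref{a:cost} yields $\cost(l') \le \cost(l)$ for the matching derived labels. For the resources, I would again invoke $\RI(t_{i, a}) \supseteq \RI(t_{j, a})$ and $\RI(t_{i, b}) \supseteq \RI(t_{j, b})$, both of which follow from $l_i \preceq'_{\ne} l_j$ by (\ref{e:RI!=}). The first inclusion produces, for every $t \in t_{j, a} \oplus k$, a $t' \in t_{i, a} \oplus k$ with $\RI(t') \supseteq \RI(t)$, exactly as in Proposition~\ref{p:case1prime}; the second inclusion carries over unchanged because the second route is untouched.

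Together these give $\RI(t') \supseteq \RI(t)$ and $\RI(t_{i, b}) \supseteq \RI(t_{j, b})$, which is precisely the normal inclusion $\RI(l') \supseteq_n \RI(l)$ of (\ref{e:RInormal}); by (\ref{e:RI=}) this implies $\RI(l') \supseteq_= \RI(l)$, and with the cost condition we obtain $l' \preceq'_= l$. This is the primed analogue of how Proposition~\ref{p:case2} reduces to Proposition~\ref{p:case1}. I do not expect a genuine obstacle; the one point needing care is the trait pairing. The inclusions I obtain match first-with-first and second-with-second, i.e.\ the normal pairing, not the cross pairing $\supseteq_x$ of (\ref{e:RIcross}). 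Since $\supseteq_=$ asks only that one pairing hold, the normal one suffices and the cross comparison is never invoked, so the argument reduces entirely to Proposition~\ref{p:case1prime}.
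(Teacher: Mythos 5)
Your proof is correct and matches the paper's intent exactly: the paper's own proof is just the remark ``As for Proposition~\ref{p:case1prime},'' and you have spelled out precisely that argument, correctly noting that the cost bound comes from Assumption~\ref{a:cost}, that the derived inclusions pair first-with-first and second-with-second to give $\RI(l') \supseteq_n \RI(l)$, and that this normal disjunct suffices for $\supseteq_=$ via (\ref{e:RI=}).
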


\begin{proof}
  For $l_i = (t_{i, a}, t_{i, b})$, $l_j = (t_{j, a}, t_{j, b})$,
  derived labels are $l' \in \{(t', t_{i, b}): t' \in t_{i, a} \oplus
  k\}$, and $l \in \{(t, t_{j, b}): t \in t_{j, a} \oplus k\}$.  For
  $l' \preceq'_= l$ to hold, $\cost(l') \le \cost(l)$ and $\RI(l')
  \supseteq_= \RI(l)$ have to hold.  By Assumption \ref{a:cost},
  $\cost(l') \le \cost(l)$ holds.  Since $l_i \preceq'_{\ne} l_j$,
  relations $\RI(t_{i, a}) \supseteq \RI(t_{j, a})$ and $\RI(t_{i, b})
  \supseteq \RI(t_{j, b})$ hold.  For every $t \in t_{j, a} \oplus k$,
  there exists $t' \in t_{i, a} \oplus k$ such that $\RI(t') \supseteq
  \RI(t)$ because $\RI(t') \subseteq \RI(t_{i, a})$, $\RI(t) \subseteq
  \RI(t_{j, a})$ and $\RI(t_{i, a}) \supseteq \RI(t_{j, a})$, and thus
  $\RI(l') \supseteq_= \RI(l)$ holds.
\end{proof}

\begin{proposition}
  $l_i \preceq'_= l_j \Rightarrow \forall l \in l_j \oplus e \;
  \exists l' \in l_i \oplus e: l' \preceq'_{\ne} l$
  \label{p:case3prime}
\end{proposition}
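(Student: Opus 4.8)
The plan is to mirror the proof of Proposition~\ref{p:case3} (the unprimed same-node case), but with the cost-and-resource comparison in place of $\preceq$ on traits, exactly as Proposition~\ref{p:case1prime} mirrors Proposition~\ref{p:case1}. Write $l_i = (t_i, t'_i)$ and $l_j = (t_j, t'_j)$ with all four traits terminating at the common node, and let edge $e$ append link $k$ to the first route, so that $l = (s, t'_j)$ with $s \in t_j \oplus k$ and the extended route now ending at a new node. The cost half of the target relation comes for free: since $\cost(l_i) \le \cost(l_j)$ is part of the premise $l_i \preceq'_= l_j$, Assumption~\ref{a:cost} gives $\cost(l') \le \cost(l)$ for every candidate $l' \in l_i \oplus e$, so the whole argument reduces to producing an $l'$ with $\RI(l') \supseteq_{\ne} \RI(l)$.

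Next I would split on the disjunction inside $\RI(l_i) \supseteq_= \RI(l_j)$, i.e.\ (\ref{e:RI=}): either the normal inclusion (\ref{e:RInormal}) or the cross inclusion (\ref{e:RIcross}) holds. In the normal case the argument is verbatim that of Proposition~\ref{p:case1prime}: from $\RI(t_i) \supseteq \RI(t_j)$ and the resource-monotonicity of $\oplus k$ I obtain $s' \in t_i \oplus k$ with $\RI(s') \supseteq \RI(s)$, and together with $\RI(t'_i) \supseteq \RI(t'_j)$ this yields $\RI(l') \supseteq_{\ne} \RI(l)$ for $l' = (s', t'_i)$.

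The only real obstacle is the cross case, where the premise supplies $\RI(t_i) \supseteq \RI(t'_j)$ and $\RI(t'_i) \supseteq \RI(t_j)$ --- inclusions that pair the first trait of $l_i$ with the second of $l_j$ and vice versa, so appending $k$ to the first route of $l_i$ no longer lines up positionally against $l$. The key observation I would lean on is that a same-node label is symmetric in its two traits: since both routes of $l_i$ end at the common node, appending $k$ to either of them produces a legitimate candidate at the same vertex, the two differing only in which trait is written first. I would therefore append $k$ to the \emph{second} route of $l_i$ (equivalently, relabel $l_i$ as $(t'_i, t_i)$ before applying $\oplus e$, which turns the cross inclusion into a normal one), choosing $s' \in t'_i \oplus k$ with $\RI(s') \supseteq \RI(s)$ --- available because $\RI(t'_i) \supseteq \RI(t_j)$ --- and record the candidate canonically as $l' = (s', t_i)$ with its newly extended route first. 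Matching the two labels by end node, the extended routes give $\RI(s') \supseteq \RI(s)$ and the unextended routes give $\RI(t_i) \supseteq \RI(t'_j)$, so $\RI(l') \supseteq_{\ne} \RI(l)$ holds and hence $l' \preceq'_{\ne} l$.

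The subtle point I would state explicitly rather than leave implicit is precisely this permission to extend the other route: it is exactly what already underlies the phrase ``with either $\preceq_n$ or $\preceq_x$ holding'' in Proposition~\ref{p:case3}, and without it the cross branch cannot be matched positionally against $l$. With that granted, the primed statement follows from the same bookkeeping as the unprimed one, carrying the cost bound along via Assumption~\ref{a:cost}.
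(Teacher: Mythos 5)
Your proof is correct and takes essentially the same approach as the paper, whose entire proof reads ``As for Proposition~\ref{p:case1prime} with either $\supseteq_n$ or $\supseteq_x$ holding.'' Your explicit handling of the cross case---appending $k$ to the \emph{other} route of $l_i$, legitimate precisely because both of its routes end at the same node, and then matching traits positionally by end node---is exactly the detail the paper leaves implicit in that one sentence.
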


\begin{proof}
  For $l_i = (t_i, t'_i)$, $l_j = (t_j, t'_j)$, derived labels are $l'
  \in \{(t', t'_i): t' \in t_i \oplus k\}$, and $l \in \{(t, t'_j): t
  \in t_j \oplus k\}$.  We assume $\preceq'_=$ holds because
  $\supseteq_n$ does; analogous reasoning holds for $\supseteq_x$.
  For $l' \preceq'_{\ne} l$ to hold, $\cost(l') \le \cost(l)$ and
  $\RI(l') \supseteq_{\ne} \RI(l)$ have to hold.  By Assumption
  \ref{a:cost}, $\cost(l') \le \cost(l)$ holds.  Since $l_i \preceq'_=
  l_j$, relations $\RI(t_i) \supseteq \RI(t_j)$ and $\RI(t'_i)
  \supseteq \RI(t'_j)$ hold.  For every $t \in t_j \oplus k$, there
  exists $t' \in t_i \oplus k$ such that $\RI(t') \supseteq \RI(t)$
  because $\RI(t') \subseteq \RI(t_{i, a})$, $\RI(t) \subseteq
  \RI(t_{j, a})$ and $\RI(t_{i, a}) \supseteq \RI(t_{j, a})$, and thus
  $\RI(l') \supseteq_{\ne} \RI(l)$ holds.
\end{proof}

\begin{proposition}
  An $\preceq'$-inefficient label yields $\preceq'$-inefficient
  labels: $l_i \preceq' l_j \Rightarrow \forall l \in l_j \oplus e
  \; \exists l' \in l_i \oplus e: l' \preceq' l$.
  \label{p:labelprime}
\end{proposition}

\begin{proof}
  Follows from Propositions \ref{p:case1prime}, \ref{p:case2prime},
  and \ref{p:case3prime}.
\end{proof}

\subsubsection{Example}


We revisit the example of the optical network in Fig.~\ref{f:en} and
use $\preceq'$ instead of $\preceq$.  Table \ref{t:el2} lists the 19
considered labels that are the subset of the 23 labels of Table
\ref{t:el1}.  We keep the label numbering the same as in the previous
example to allow for comparison.  Fig.~\ref{f:et2} shows the
$\preceq'$-efficient-label tree.


Label $l_{11}$ is $\preceq'$-inefficient ($l_9 \preceq' l_{11}$) and
cannot yield $\preceq'$-efficient labels by Proposition
\ref{p:labelprime}.  Therefore it is dropped and does not yield label
$l_{15}$, leaving only $l_{17}$ for vertex $(n_3, n_3)$.  Furthermore,
label $l_{20}$ is still dropped, but this time because $l_{17}
\preceq' l_{20}$.

\input{tab_labels2.tab}


\begin{figure}
  \centering
  \begin{tikzpicture}[> = stealth]

    \begin{scope}[every node/.style = {rectangle split, draw,
          rectangle split ignore empty parts, anchor = north,
          minimum width = 1 cm}, node distance = \gd]

      \node (n11) {$l_1$};
      \node [right of = n11] (n12) {$l_2$};
      \node [right of = n12] (n13) {$l_4$\nodepart{two}$l_5$};
      \node [below of = n12] (n22) {$l_7$};
      \node [below of = n13] (n23) {$l_9$\nodepart{two}$l_{12}$};
      \node [below of = n23] (n33) {$l_{17}$};

    \end{scope}

    \node[above = \dan of n11.north] {$(n_1, n_1)$};
    \node[above = \dan of n12.north] (n12l) {$(n_1, n_2)$};
    \node[above = \dan of n13.north] (n13l) {$(n_1, n_3)$};
    \node[above = \dan of n22.north] {$(n_2, n_2)$};
    \node[above = \dan of n23.north] (n23l) {$(n_2, n_3)$};
    \node[above = \dan of n33.north] (n33l) {$(n_3, n_3)$};


    \node (n13n23a) at ($(n13.south)!0.5!(n23l)$) {$e_8$};

    \node (n23n33) at ($(n23.south)!0.45!(n33l)$) {$e_{12}$};

    \begin{scope}[->, every node/.style = {circle, fill = white,
          inner sep = 1 pt}]

      \draw (n11.text east) to [out = 0, in = 180]
      node {$e_1$} (n12.text west);

      \path [name path = ver23] ($(n12l)!0.5!(n13l)$)
      -- ($(n22)!0.5!(n33)$);

      \draw [name path = n12n13a] (n12.text east)
      to [controls = +(0:0.5) and +(180:2)] (n13.text west);
      \path [name intersections = {of = n12n13a and ver23}]
        node at (intersection-1) {$e_3$};

      \draw [name path = n12n13b] (n12.text east)
      to [controls = +(0:0.5) and +(180:2)] (n13.two west);
      \path [name intersections = {of = n12n13b and ver23}]
        node at (intersection-1) {$e_4$};
      
      \draw [name path = n12n22] (n12.text east)
      to [controls = +(0:1.25) and +(180:2)] (n22.text west);
      \path [name path = n12--n22] (n12) -- (n22);
      \path [name intersections = {of = n12n22 and n12--n22}]
      node at (intersection-1) {$e_6$};

      \draw[-] (n13.text east) to [controls = +(0:0.75) and +(10:1)]
      (n13n23a);
      \draw (n13n23a) to [controls = +(190:1) and +(180:0.75)]
      (n23.text west);

      \draw [name path = n22n23b] (n22.text east)
      to [controls = +(0:0.5) and +(180:1.5)] (n23.two west);
      \path [name intersections = {of = n22n23b and ver23}]
      node at (intersection-1) {$e_{10}$};

      \draw[-] (n23.text east) to [controls = +(0:0.75) and +(10:1)]
      (n23n33);
      \draw (n23n33a) to [controls = +(190:1) and +(180:0.75)]
      (n33.text west);

    \end{scope}

  \end{tikzpicture}
  \caption{The $\preceq'$-efficient-label tree.}
  \label{f:et2}
\end{figure}


\begin{lemma}
  An $\preceq'$-efficient label has the $\preceq'$-efficient
  substructure, i.e., has been derived from an $\preceq'$-efficient
  label.
  \label{l:efficient}
\end{lemma}

\begin{proof}
  We prove by contradiction: we assume a label is
  $\preceq'$-efficient, and suppose that its substructure is not to
  conclude that the assumption was false (i.e., the label is
  $\preceq'$-inefficient).

  We search for $\preceq'$-efficient labels from vertex $s$, as shown
  in Fig.~\ref{f:substructure}.  To examine the label substructure, we
  assume a path to $x$ is made of a path (of at least one edge) to
  preceding vertex $w \ne s$ and the appended edge $e$.  We assume two
  paths lead to $w$: path $p_i$ with $\preceq'$-efficient label $l_i$,
  and path $p_j$ with $\preceq'$-inefficient label $l_j$, and so $l_i
  \preceq' l_j$.  The label of a path to $x$ is derived from a label
  of a path to $w$.

  We assume there exists path $p$ to vertex $x$ with
  $\preceq'$-efficient label $l$.  But we suppose an
  $\preceq'$-inefficient substructure: path $p$ is made of path $p_j$
  with $\preceq'$-inefficient label $l_j$.

  By Proposition \ref{p:labelprime}, for every label $l \in l_j \oplus
  e$ there exists $l' \in l_i \oplus e$ such that $l' \preceq' l$, and
  so $l$ cannot be $\preceq'$-efficient by Definition
  \ref{d:label_efficiency}, and that contradicts the assumption.  The
  proof applies to all preceding labels (of intermediate vertexes $w$
  of a path) as we track them back recursively to finally stop at the
  initial label.
\end{proof}

\subsection{Complexity}


For the worst case in Fig.~\ref{f:lobe}, if label cost is the sum of
trait costs (as usually is for DDPP), only one out of $2^m$ labels can
be kept for vertex $x$ because they all are of cost $2 ^ {(m + 1)} -
1$.  For optical networks, the number of $\preceq'$-incomparable
labels a vertex can have depends also on the number of units and is
$O(N^2|\Omega|^4)$ \cite{10.3390/e23091116}, a polynomial bound.
Since the size of the search space is polynomially bounded and generic
Dijkstra can find an exact solution, the DDPP problem with unlimited
route cost is tractable.

\section{Conclusion}


We discerned where the NP-completeness of dynamic dedicated path
protection in the optical networks comes from: the limit on the route
cost.  We dropped $\preceq$ that causes exponential complexity and
introduced a new relation $\preceq'$.


While the spectrum constraints make the problem harder, it is the
contiguity constraint (i.e., only contiguous units are allowed) that
polynomially binds the size of search space.


In optical networks, the generic Dijkstra algorithm can solve exactly
the unlimited problem, provided Assumption \ref{a:cost} holds and the
new relation $\preceq'$ is used.  We admit that using $\preceq$, as in
\cite{10.3390/e23091116}, would make the problem intractable.


The generic Dijkstra algorithm can also exactly solve the limited
problem, provided Assumption \ref{a:trait} holds and the relation
$\preceq$ is used because Lemma \ref{l:efficient} also holds for
$\preceq$ (since Proposition \ref{p:label} does).  With exponential
memory and time complexity.


The conclusions hold not only the link-disjoint routes that start and
end at the same nodes, but for any link-disjoint routes of minimal
cost that start and end at different nodes.

\bibliographystyle{IEEEtran}
\bibliography{all}

\section{Appendix}
\label{s:appendix}

\begin{proposition}
  For sorted labels $l_i$ and $l_j$ whose both routes terminate at the
  same node, $l_i \npreceq_n l_j \Rightarrow l_i \npreceq_x l_j$.
  \label{t:redundant}
\end{proposition}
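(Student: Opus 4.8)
The plan is to prove the contrapositive, which for sorted labels reads: the cross comparison implies the normal one, i.e.\ $l_i \preceq_x l_j \Rightarrow l_i \preceq_n l_j$. Writing $l_i = (t_i, t'_i)$ and $l_j = (t_j, t'_j)$, I would first collect the hypotheses. Sortedness of the two labels supplies $t_i \preceq t'_i$ and $t_j \preceq t'_j$, while the assumed cross comparison (\ref{e:cross}) supplies $t_i \preceq t'_j$ and, crucially, $t'_i \preceq t_j$. The goal is then to derive the two conjuncts of the normal comparison (\ref{e:normal}), namely $t_i \preceq t_j$ and $t'_i \preceq t'_j$.

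The key step is to chain these relations through the transitivity of $\preceq$ on traits, using the single cross inequality $t'_i \preceq t_j$ as the bridge. For the first conjunct, combine $t_i \preceq t'_i$ (sortedness of $l_i$) with $t'_i \preceq t_j$ to get $t_i \preceq t_j$. For the second conjunct, combine $t'_i \preceq t_j$ with $t_j \preceq t'_j$ (sortedness of $l_j$) to get $t'_i \preceq t'_j$. These two inequalities together are precisely $l_i \preceq_n l_j$, so the contrapositive is established. Note that the restriction to labels whose routes terminate at the same node is exactly the regime in which the cross comparison $\preceq_x$ is even defined, so no further case analysis on the node configuration is needed.

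The only nonroutine ingredient is that $\preceq$ is transitive on traits, which both chains rely on; I would justify this by recalling that $\preceq$ is the \emph{better than or equal to} partial order on traits for a given node, whose comparison outcomes $\prec$, $=$, $\succ$, and $\parallel$ are the ones tabulated in Table \ref{t:l_rels}. I do not anticipate a genuine obstacle: the entire content of the proposition is that sortedness lets the lone cross inequality $t'_i \preceq t_j$ absorb the sortedness endpoints on either side, collapsing the cross comparison into the normal one. The proof is therefore a short transitivity argument rather than a case analysis.
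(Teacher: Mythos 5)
Your proof is correct and takes essentially the same route as the paper: both prove the contrapositive $l_i \preceq_x l_j \Rightarrow l_i \preceq_n l_j$ using the lone cross inequality $t'_i \preceq t_j$ as the bridge in the two transitivity chains $t_i \preceq t'_i \preceq t_j$ and $t'_i \preceq t_j \preceq t'_j$. The only quibble is your aside citing Table~\ref{t:l_rels} as justification for transitivity of $\preceq$ on traits --- that table lists label relations, not trait relations --- but this does not affect the argument, since the paper likewise simply invokes transitivity of $\preceq$.
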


\begin{proof}
  We prove the contrapositive: $l_i \preceq_x l_j \Rightarrow l_i
  \preceq_n l_j$.

  For labels $l_i = (t_i, t'_i)$ and $l_j = (t_j, t'_j)$ that are
  sorted (i.e., $t_i \preceq t'_i$ and $t_j \preceq t'_j$), if $l_i
  \preceq_x l_j$ holds (specifically, $t'_i \preceq t_j$ holds), then
  both $t_i \preceq t_j$ and $t'_i \preceq t'_j$ hold (because both
  $t_i \preceq t'_i \preceq t_j$ and $t'_i \preceq t_j \preceq t'_j$
  hold, and relation $\preceq$ is transitive), i.e., $l_i \preceq_n
  l_j$ holds.
\end{proof}

\begin{proposition}
  For sorted labels $l_i$ and $l_j$ whose both routes terminate at the
  same node, $l_i \npreceq_x l_j \centernot\Rightarrow l_i \npreceq_n
  l_j$.
  \label{t:incorrect}
\end{proposition}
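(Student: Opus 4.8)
The plan is to prove this non-implication by exhibiting a counterexample: I would construct sorted labels $l_i = (t_i, t'_i)$ and $l_j = (t_j, t'_j)$ for which the cross comparison $\preceq_x$ fails while the normal comparison $\preceq_n$ holds. Then $l_i \npreceq_x l_j$ is true but $l_i \npreceq_n l_j$ is false, so the implication cannot hold. This is the natural counterpart to Proposition \ref{t:redundant}, which established the reverse direction via the contrapositive; here no such chain exists, precisely because the missing direction is the one I intend to break.

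First I would isolate which conjunct of $\preceq_x$ can be violated under $\preceq_n$. Recall $l_i \preceq_x l_j$ requires both $t_i \preceq t'_j$ and $t'_i \preceq t_j$. If the labels are sorted ($t_i \preceq t'_i$ and $t_j \preceq t'_j$) and $\preceq_n$ holds ($t_i \preceq t_j$ and $t'_i \preceq t'_j$), then $t_i \preceq t'_j$ follows automatically by transitivity from $t_i \preceq t'_i \preceq t'_j$. Hence the only conjunct of $\preceq_x$ that can fail while $\preceq_n$ holds is $t'_i \preceq t_j$, which pits the worse trait of $l_i$ against the better trait of $l_j$. This pinpoints the construction: I need $t'_i \npreceq t_j$ while retaining $t_i \preceq t_j$ and $t'_i \preceq t'_j$.

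Then I would instantiate the traits concretely in the simplest regime, where all four traits offer the same resources so that $\preceq$ reduces to cost comparison with lower cost better, as in the cited optical setting. I would assign $\cost(t_i) = 1$, $\cost(t'_i) = 3$, $\cost(t_j) = 2$, and $\cost(t'_j) = 4$. Both labels are sorted since $1 \le 3$ and $2 \le 4$; the normal comparison holds because $t_i \preceq t_j$ ($1 \le 2$) and $t'_i \preceq t'_j$ ($3 \le 4$), giving $l_i \preceq_n l_j$; and the cross comparison fails because $t'_i \preceq t_j$ would require $3 \le 2$, giving $l_i \npreceq_x l_j$. This single instance witnesses the non-implication.

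The main point requiring care — and the only real obstacle — is ensuring the counterexample lives in the model rather than being a bare chase of inequalities: I must confirm the four assigned traits are realizable, i.e., that four routes can genuinely have these costs while sharing one common contiguous interval of available units. This is immediate in the optical setting, where routes of differing total link cost can carry an identical block of units, so the costs are attainable and the comparison legitimately collapses to cost alone. With realizability noted, the counterexample is valid and the proposition follows.
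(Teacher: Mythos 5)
Your proposal is correct, and it completes the argument in a way the paper's own proof only gestures at. Both you and the paper reduce the claim to the same reformulation (show that $l_i \preceq_n l_j$ can hold while $l_i \preceq_x l_j$ fails), and both pinpoint the same critical conjunct: under sortedness and $\preceq_n$, transitivity forces $t_i \preceq t'_j$, so the only conjunct of $\preceq_x$ that can break is $t'_i \preceq t_j$. Where you diverge is the final step. The paper argues abstractly that $t'_i \preceq t_j$ cannot be derived because the available chains ($t'_i \succeq t_i \preceq t_j$ and $t'_i \preceq t'_j \succeq t_j$) are ``inconclusive'' --- that is, the hypotheses do not entail the relation. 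Strictly speaking, non-derivability alone does not prove a non-implication; one needs an actual instance where the relation fails. You supply exactly that: the cost assignment $\cost(t_i)=1$, $\cost(t'_i)=3$, $\cost(t_j)=2$, $\cost(t'_j)=4$ with identical resources, so that $\preceq$ collapses to cost comparison, both labels are sorted, $\preceq_n$ holds ($1 \le 2$, $3 \le 4$), and $\preceq_x$ fails ($3 \not\le 2$). Your closing realizability check --- that four routes to one node can share a common contiguous interval of units while having different lengths --- is the right touch, since it places the witness inside the paper's own optical model rather than in an arbitrary poset. In short: your counterexample-based proof is logically tighter than the paper's; the paper's version is shorter and identifies the same structural reason for failure, but leaves the witness implicit.
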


\begin{proof}
  We prove the contrapositive: $l_i \preceq_n l_j
  \centernot\Rightarrow l_i \preceq_x l_j$.

  For labels $l_i = (t_i, t'_i)$ and $l_j = (t_j, t'_j)$ that are
  sorted (i.e., $t_i \preceq t'_i$ and $t_j \preceq t'_j$), if $l_i
  \preceq_n l_j$ holds (i.e., $t_i \preceq t_j$ and $t'_i \preceq
  t'_j$ hold), then $t_i \preceq t'_j$ holds (because both $t_i
  \preceq t_j \preceq t'_j$ and $t_i \preceq t'_i \preceq t'_j$ hold,
  and relation $\preceq$ is transitive), but $t'_i \preceq t_j$ does
  not (because of inconclusive $t'_i \succeq t_i \preceq t_j$ and
  $t'_i \preceq t'_j \succeq t_j$), and therefore $l_i \preceq_x l_j$
  does not hold.
\end{proof}

\end{document}